\newtheorem{problem}[theorem]{Problem}
\newcommand{\etal}{{et~al.}}
\newcommand{\ie}{{i.e.}}
\newcommand{\eps}{\varepsilon}
\providecommand{\intd}[0]%
{\;\mbox{d}}
\let\oldnl\nl
\newcommand{\nonl}{\renewcommand{\nl}{\let\nl\oldnl}}
\def\TitleOfAlgo{\@ifnextchar({\@TitleOfAlgoAndComment}{\@TitleOfAlgoNoComment}}
\def\@TitleOfAlgoAndComment(#1)#2{\nonl\hspace*{-1.5em}#2 #1\;}
\def\@TitleOfAlgoNoComment#1{\nonl\hspace*{-1.5em}#1\;}
\newcommand{\later}[1]{}
\newcommand{\old}[1]{}
\title{A Purely Geometric Variant of the Gale--Berlekamp Switching Game}
\titlerunning{A Purely Geometric Variant of the Gale--Berlekamp Switching Game}
\author{Adrian Dumitrescu}
{Algoresearch L.L.C., Milwaukee, WI, USA}
{ad.dumitrescu@algoresearch.org}
{0000-0002-1118-0321}
{}
\author{Jeck Lim}
{California Institute of Technology, Pasadena, CA, USA}
{jlim@caltech.edu}
{}
{}
\author{J\'anos Pach}
{Alfr\'ed R\'enyi Institute of Mathematics, Budapest, Hungary}
{pach@renyi.hu}
{0000-0002-2389-2035}
{}
\author{Ji Zeng}
{Alfr\'ed R\'enyi Institute of Mathematics, Budapest, Hungary}
{jzeng@ucsd.edu}
{}
{}
\authorrunning{Adrian Dumitrescu, Jeck Lim, J\'anos Pach, and Ji Zeng}
\keywords{maximum discrepancy, switching game, ordinary line}
\begin{document}

\maketitle

\begin{abstract}
We introduce the following variant of the Gale--Berlekamp switching game.
Let $P$ be a set of n noncollinear points in the plane, each of them having weight $+1$ or $-1$.
At each step, we pick a line $\ell$ passing through at least two points of $P$, and switch
the sign of every point $p \in P\cap\ell$. The objective is to
maximize the total weight of the elements of $P$. We show that one can
always achieve that this quantity is at least $n - o(n)$, as
$n\rightarrow\infty$, and at least $n/3$, for every $n$. Moreover,
these can be attained by a polynomial time algorithm. 
\end{abstract}

\section{Introduction} \label{sec:intro}

The Gale--Berlekamp switching game (introduced in the $1960$s)  can be described as follows.
Consider a $\sqrt{n} \times \sqrt{n}$ array of $n$ lights. Each light has two possible states,
\emph{on} (or $+1$) and \emph{off} ($-1$, respectively).
To each row and to each column of the array there is a switch. Turning a switch changes the state of
each light in that row or column. Given an initial state of the board, \ie, a certain on or off position
for each light in the array, the goal is to turn on as many lights as possible.
Equivalently, the number of lights that are on minus the number of lights that are off
should be as large as possible, when starting from the initial configuration. See Fig~\ref{fig:board}.
The first correct analysis of the original version of the game with $n=100$ was given by
Carlson and Stolarski~\cite{CaS} in 2004. (cf.~\cite{FS89}.)

\begin{figure}[htbp]
 \centering
 \includegraphics[scale=0.95]{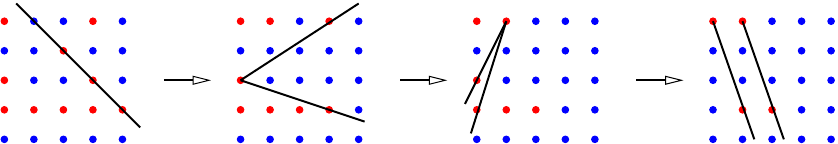}
 \caption{A $5 \times 5$ grid for the original game whose signed discrepancy is $5$ and whose
   maximum signed discrepancy is $15$ (left); points with weight $-1$ are red.
   The latter parameter becomes $25$ in the purely geometric variant, where one switches
   all lights along any line (not necessarily vertical or horizontal) spanned by the points.} 
 \label{fig:board}
\end{figure}

By a \emph{configuration} we shall mean the state of the machine, \ie, precisely which lights are on and off.
Following Beck and Spencer~\cite{BS83}, define the \emph{signed discrepancy} of a configuration
as the number of lights that are on minus the number of lights that are off.
Equivalently, this is just the sum of the weights $+1$ and $-1$ corresponding to the on and off lights,
respectively.

The asymptotic behavior of the original version of the game, played on the $\sqrt{n}\times\sqrt{n}$
square board, was studied by Brown and Spencer~\cite{BS71} and revisited by Spencer~\cite{Spe94}.
They showed that one can always achieve a signed discrepancy of $\Omega(n^{3/4})$ and that
the order of magnitude of this bound cannot be improved. Interestingly enough, both the upper
and the lower bound are obtained by the probabilistic method~\cite{AS16}.
\smallskip

Here we consider the following, purely geometric, variant of the switching game.
The board is any noncollinear $n$-element point set $P$ in the plane,  equipped with a switch
for each connecting line. A \emph{configuration} is a pair $(P,w)$, where $w$ is an assignment
of $\pm 1$  weights to the elements of $P$. Denote by $|w|$ the sum of the weights $\sum_{p\in P} w(p)$.

Let $F(P,w_0)$ denote the largest signed discrepancy that can be achieved by applying a series
of switches to the \emph{initial configuration} $(P,w_0)$, that is, let
\begin{equation} \label{eq:initial}
  F(P,w_0) = \max_{w} |w|,
\end{equation}
where the maximum is taken over all weight assignments $w : P \rightarrow\{+1,-1\}$ that can be obtained
starting with the initial assignment $w_0$. Finally, define 
\begin{equation} \label{eq:F}
  F(P) = \min_{w_0} F(P,w_0),
\end{equation}
where $w_0$ runs through all initial weight assignments.

We show that, regardless of the initial assignment, if $n$ is large
enough, then a skilled player can always achieve a linear signed
discrepancy $cn$, for any constant $c$ arbitrarily close to 1 and
independent of $P$. Further, one could do this by only applying a
linear number of switches out of possibly a quadratic number. 

\begin{theorem}\label{thm:new}
  For any noncollinear $n$-element point set $P$ in the plane, consider a board where there is
  a switch for each connecting line.
  
  Then we have $F(P) \geq n - o(n)$, i.e., in the corresponding game,
  one can always turn on at least $n-o(n)$ lights. Moreover, the lower bound can be achieved
  by an application of $O(n)$ switches.
\end{theorem}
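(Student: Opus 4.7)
The first step is an algebraic reformulation. Encode a weight assignment $w$ as the vector $v\in\mathbb{F}_2^P$ with $v(p)=1$ iff $w(p)=-1$, and let $V\le\mathbb{F}_2^P$ be the $\mathbb{F}_2$-span of the indicator vectors $\mathbf{1}_{P\cap\ell}$ of the connecting lines $\ell$. Switching along $\ell$ simply adds $\mathbf{1}_{P\cap\ell}$ modulo $2$ to the current $v$, so the reachable configurations from $v_0$ form the coset $v_0+V$ and the best attainable signed discrepancy equals $n-2\,\dist(v_0,V)$ in the Hamming metric. Theorem~\ref{thm:new} thus reduces to the statement that $V$ has covering radius $o(n)$ in $\mathbb{F}_2^P$. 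Call a connecting line \emph{ordinary} if $|P\cap\ell|=2$; each such line $\{p,q\}$ contributes the transposition $\mathbf{e}_p+\mathbf{e}_q$ to $V$, so $V$ contains the edge space of the graph $G$ on $P$ whose edges are the ordinary lines of $P$.

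My algorithm is a greedy on $G$. As long as there is an edge $\{p,q\}\in E(G)$ with both endpoints in the current negative set $N$, switch the ordinary line $pq$; this flips both weights and drops $|N|$ by $2$. The procedure halts after at most $n/2$ switches, with the remaining $N$ an independent set of $G$. If $|N|=o(n)$ at that moment we are done, so the entire argument reduces to the ``stuck'' case, in which the terminal $N$ is a linear-size independent set of $G$.

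For the stuck case I invoke the Green--Tao theorem on ordinary lines---every noncollinear $n$-point set has $\ge\lceil n/2\rceil$ ordinary lines for $n$ large enough---and, crucially, its structural refinement classifying the near-extremal configurations. The upshot is that if the ordinary-line graph of $P$ admits an independent set of linear size, then $P$ must be \emph{close to} either a near-pencil (all but $o(n)$ points on a single line $L$) or a coset-type configuration on a cubic. In the near-pencil case the switch along $L$ flips $n-o(n)$ weights in a single move, after which at most $o(n)$ negatives remain on $L$, and $o(n)$ further ordinary-line switches between $L$ and the $o(n)$ off-line points clean up the rest. In the cubic case, the abundant $3$-point collinearities supplied by the group law provide analogous moves that, combined with the ordinary lines, drive $|N|$ down to $o(n)$. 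In each subcase the total switch count is $O(n)$, yielding $F(P)\ge n-o(n)$ via~\eqref{eq:F}. The main obstacle is precisely this structure step---translating ``linear-size independent set in $G$'' into a concrete structural alternative---and for it I would combine Green--Tao's stability theorem with a case analysis on the multiplicity of the richest connecting line of $P$; honest control of the $o(n)$ error term also comes from tracking constants there.
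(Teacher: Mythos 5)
Your opening reformulation (reachable configurations form a coset of the $\mathbb{F}_2$-span of the line indicators, so the problem is a covering-radius bound) and your instinct to combine the ordinary-line graph $G$ with the Green--Tao structure theorem are both sound, and they match the paper's general plan. But the reduction to the structure theorem has a genuine gap. Your greedy only switches an ordinary line when \emph{both} endpoints are currently negative, so it halts as soon as the negative set $N$ is independent in $G$ --- and this can happen even when $G$ is connected and dense (take a star: centre positive, leaves negative; no edge is switchable, yet a spanning-tree sweep clears all leaves). More seriously, the ``stuck'' hypothesis \emph{linear-size independent set in $G$} does not imply \emph{few ordinary lines}, which is the actual hypothesis of the Green--Tao structure theorem. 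Concretely: put $n/2$ points on a line $L$ and $n/2$ points in general position off $L$. The points of $L$ form a linear-size independent set of $G$ (every pair of them lies on the rich line $L$), and your greedy gets stuck there if exactly those points start negative; yet $P$ spans $\Omega(n^2)$ ordinary lines, so the structure theorem says nothing, and $P$ is neither close to a near-pencil nor to a cubic coset. Your promised ``case analysis on the multiplicity of the richest line'' is exactly the missing content, not a routine check.

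The paper avoids this by running the dichotomy on the size of the largest \emph{connected component} $H$ of $G$ rather than on the size of a leftover independent set. Inside any component, a spanning-tree argument (switching each vertex's edge to its parent, processed by decreasing depth) clears all negatives except possibly the root, with $|V(H)|-1$ switches --- strictly stronger than matching up adjacent negatives. If $H$ is large one recurses on $P\setminus V(H)$ and loses at most $2$ on $H$; if every component has size below $K$ then $G$ has fewer than $Kn$ edges, i.e.\ $P$ genuinely spans fewer than $Kn$ ordinary lines, and only then is Green--Tao invoked (in the form: after a projective map and deleting $O(K)$ points, at most one line meets the remaining set in more than three points, which handles your cubic case concretely rather than via the group law). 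I would suggest replacing your greedy-matching/independent-set dichotomy with this component-size dichotomy; the rest of your outline can then be made to work.
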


Our next theorem below gives a concrete bound that holds for every $n$.

\begin{theorem} \label{thm:old}
  For any noncollinear  $n$-element point set $P$ in the plane, consider a board where there is
  a switch for each connecting line.
  
  Then we have $F(P) \geq n/3$, i.e., in the corresponding game,
  one can always turn on at least $2n/3-1$ lights. Moreover, the lower bound can be achieved
  by an application of $O(n)$ switches.
\end{theorem}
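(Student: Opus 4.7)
My plan is to argue by strong induction on $n$, removing three carefully chosen points at each step via a two-ordinary-line \emph{gadget}. The base cases $n\le 5$ reduce to a finite check: for instance, on a noncollinear triple the three ordinary-line switches generate the even-weight subspace of $\{\pm 1\}^3$, so every orbit contains a configuration with at most one $-1$, giving $|w|\ge 1 = n/3$.

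For the inductive step with $n\ge 6$, the key structural ingredient I would use is that every noncollinear planar point set $P$ contains a point $p$ incident to at least two distinct ordinary lines $\ell_1=\{p,q_1\}$ and $\ell_2=\{p,q_2\}$ of $P$. I would establish this as an auxiliary lemma via a dichotomy: if the total number of ordinary lines in $P$ exceeds $n/2$, pigeonhole on the ordinary-line graph forces a vertex of degree $\ge 2$; otherwise $P$ is one of a small collection of extremal configurations (such as a near-pencil, where all but one point lie on a single line, or a Kelly--Moser-type set on seven points), which are checked by hand to still contain such a vertex.

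Let $G:=\{p,q_1,q_2\}$ and apply the inductive hypothesis to $P':=P\setminus G$, assumed noncollinear (otherwise $P$ has near-pencil-like structure, and a direct analysis gives $|w|\ge n-O(1)\ge n/3$). Induction produces a switching sequence of length $O(n-3)$ realizing a sign pattern $w'$ on $P'$ with $|w'|\ge (n-3)/3$. When lifted to $P$, each switched line has at least two points in $P'$ and so cannot coincide with $\ell_1$ or $\ell_2$ (which contain only the point $p$ outside $G$); therefore the lifted sequence preserves $w'$ on $P'$ while inducing some sign pattern on $G$. Now apply an appropriate subset of $\{\ell_1,\ell_2\}$: these switches act only on $G$ and jointly span the even-weight subspace of $\{\pm 1\}^G$, so among the four reachable configurations on $G$ at least one has at most one $-1$. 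This contributes at least $+1$ to $|w|$, and hence
\[
|w|\;\ge\;\frac{n-3}{3}+1\;=\;\frac{n}{3}.
\]

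\noindent\textbf{Main obstacle.} The central technical challenge is the auxiliary lemma guaranteeing a vertex incident to two ordinary lines for every noncollinear $P$ with $n\ge 6$, together with the explicit treatment of the near-pencil corner case (when $P'$ is collinear). Once these are in hand, the $O(n)$ bound on the total number of switches follows by telescoping, since each inductive step contributes only $O(1)$ switches on top of the recursive call and the recursion depth is $O(n/3)=O(n)$.
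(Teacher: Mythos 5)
Your reduction gadget is sound: if $\ell_1=\ell(p,q_1)$ and $\ell_2=\ell(p,q_2)$ are ordinary lines of $P$ sharing the point $p$, then switching subsets of $\{\ell_1,\ell_2\}$ acts only on $\{p,q_1,q_2\}$, generates the even-weight flips there, and so always reaches a sign pattern on the triple summing to at least $1$; combined with induction on $P\setminus\{p,q_1,q_2\}$ this gives $F(P)\ge F(P')+1$, exactly as in the paper's Lemma~\ref{lem:reduction}(ii). The problem is the structural lemma you hang everything on: it is \emph{false} that every noncollinear set (even excluding near-pencils) has a point incident to two ordinary lines. The six intersection points of four generic lines (the complete quadrilateral) have exactly three ordinary lines --- the three diagonals --- and these form a perfect matching, so every point lies on exactly one ordinary line. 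More generally, the B\"or\"oczky-type configurations $X_{2m}$ appearing as Type~II in Theorem~\ref{thm:greentao} have, for odd $m$, exactly $m=n/2$ ordinary lines forming a perfect matching on the $n=2m$ points. So the ``small collection of extremal configurations checked by hand'' in your dichotomy is neither small nor checkable in your favor: it contains arbitrarily large sets on which your only inductive move is unavailable, and your induction stalls there.

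The paper escapes this by having a \emph{second} removable gadget: a line $\ell$ containing exactly three points $p,q,r$ of $P$ (switching $\ell$ touches only these three and can make their sum $\ge 1$, giving the same $F(P)\ge F(P')+1$). The case analysis is then: if some line carries at least $n-3$ points, handle the near-pencil directly; else if $t_3>0$, remove a $3$-point line; else $t_3=0$ and Hirzebruch's inequality $t_2+\frac34 t_3\ge n$ forces $t_2\ge n>n/2$, at which point your pigeonhole argument (a matching has at most $n/2$ edges) legitimately produces a point on two ordinary lines. Note that the problematic configurations above are exactly those rich in $3$-point lines, which is why the two gadgets together cover all cases while either one alone does not. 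Your base cases and the $O(n)$ switch count are fine, but without the $3$-point-line case (or some substitute for it backed by a quantitative result like \eqref{eq:hirzebruch} or \eqref{eq:erdos-purdy}) the inductive step has a genuine hole.
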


Notice that if $n$ is odd and all but one of the points of $P$ are collinear,
then we have $F(P) \leq n-2$. Indeed, if initially an odd number of lights are off $(-1)$,
then one cannot turn all lights on, because no switch can change the parity of the off lights.
Actually, it is easy to see that for this board, we have $F(P)=n-2$.
On the other hand, if $n$ is even and all but one of the points of $P$ are collinear,
then one can turn on all the lights. See Fig.~\ref{fig:col}.

\begin{figure}[htbp]
\centering
\includegraphics[scale=0.95]{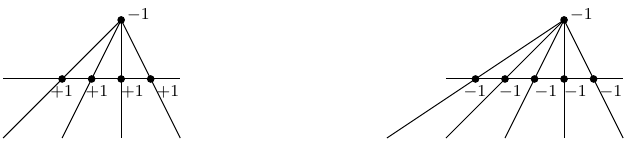}
\caption{Two warm-up board examples.}
\label{fig:col}
\end{figure}

Our proofs crucially rely on the concept of \emph{ordinary line}.
Given a point set $P$, a connecting line is called \emph{ordinary} if it contains precisely
two points of $P$~\cite{BM90}. 

It will be evident from the proof of Theorem~\ref{thm:old} that
repeatedly computing a line incident to the maximum number of points
and reducing the current size by $3$ yields a cubic-time
algorithm. The proof of Theorem~\ref{thm:new} uses more technical
machinery but repeatedly computing the ``ordinary line graph'' of the
point set under consideration yields a polynomial-time algorithm that
achieves $F(P) \geq (1-\eps)n$ for $n$ sufficiently large
depending on arbitrary $\eps > 0$. 

Some interesting questions remain:

\begin{problem}
  Given a noncollinear $n$-element point set in the plane with assigned $\pm 1$ weights,
  can one efficiently compute a sequence of switches that produce a weight assignment
  of maximum signed discrepancy?
\end{problem}

\begin{problem}
    Is there an absolute constant $c$ such that given any plane
    noncollinear $n$-element point set $P$, one can achieve the
    maximum signed discrepancy $F(P) \geq n - c$? 
\end{problem}

\section{Proof of Theorem~\ref{thm:new}}

To prove Theorem~\ref{thm:new}, we introduce the notion of
\textit{ordinary line graph}. The ordinary line graph associated with
a point set $P$ is defined as an auxiliary graph $G$ whose vertex set
$V(G)$ consists of all points in $P$, and two points are connected by
an edge if and only if they determine an ordinary line.
The following lemma will be handy for us. 

\begin{lemma}\label{lem:component}
Let $G$ be the ordinary line graph associated to a point set $P$ in
the plane, and $u$ be a vertex inside a connected component $H$ of
$G$. 

Then any configuration $(P,w)$ can be transformed into another
configuration $(P,w')$ through a series of line switches such that
$w'(v) = +1$ for all $v$ in $H$ except $u$ (i.e., $w'(u)$ could be
either $\pm 1$), and $w'(v) = w(v)$ for all $v$ not in
$H$. Furthermore, the number of line switches needed is at most
$|V(H)|-1$.  
\end{lemma}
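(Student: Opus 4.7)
The plan rests on a very clean feature of switches along ordinary lines. By definition, an ordinary line associated to an edge $vv' \in E(G)$ contains precisely the two points $v$ and $v'$ of $P$, so switching this line flips exactly $w(v)$ and $w(v')$ while leaving every other weight in $P$ unchanged. In particular, any switch along an edge of $H$ stays strictly inside $H$, and the lemma reduces to a purely combinatorial claim: using such edge-flips, we want to force every vertex of $H\setminus\{u\}$ to weight $+1$ within a budget of $|V(H)|-1$ switches.

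To achieve this, I would fix a spanning tree $T$ of $H$ rooted at $u$ and then process the non-root vertices in leaf-first order on $T$. Concretely, I would repeatedly pick a current leaf $v \ne u$ with parent $v'$, switch the ordinary line through $v$ and $v'$ if and only if $w(v) = -1$, and then delete $v$ from the tree. The key invariant, which I would verify by a straightforward induction on the processing order, is that once a vertex is deleted it is never touched again: all subsequent switches use edges of the shrinking tree, none of which is incident to $v$. Hence every processed vertex leaves the procedure with weight $+1$ and stays there, while the root $u$ may absorb arbitrarily many flips along the way --- this is why its final weight is left unconstrained in the statement.

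No vertex outside $H$ is ever affected, because every switch we perform is along an ordinary line whose two incident points both lie in $H$. The switch count is immediate: at most one switch per non-root vertex of $T$ yields the advertised bound $|V(H)|-1$. I do not foresee a serious obstacle; the only delicate point is the leaf-first ordering of the spanning tree, which is precisely what guarantees that already-corrected vertices are never disturbed.
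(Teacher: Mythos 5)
Your proposal is correct and follows essentially the same route as the paper: both fix a spanning tree of $H$ rooted at $u$ and process vertices bottom-up (your leaf-deletion order and the paper's decreasing-depth order both guarantee every vertex is handled before its parent), using that a switch along an ordinary line affects exactly its two endpoints. No gaps.
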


\begin{proof}
Consider a spanning tree $T\subset H$ with $u$ being its root. For each vertex
$v \in V(T)$, its depth is the length of the path from $v$ to the root $u$. Let
$D$ be the maximum depth of this tree. Iteratively for $i = D, D-1, \dots, 1$,
we apply line switches to each $v \in V(T)$ of depth $i$, using the ordinary
line represented by the edge between $v$ and its parent, to switch the weight of
$v$ to be $+1$. Obviously, at the $i$-th round, only the weights of vertices of
depths $i$ or $i-1$ may be affected. Observe that we only ever apply line
switches associated to the edges of $T$, so we have applied at most
$|E(T)|=|V(T)|-1$ many switches. 
\end{proof}

Another ingredient of our proof is the following seminal result of Green and Tao~\cite[Thm.~1.5]{GT13}.
We give the full statement of this result involving terminologies from algebraic geometry.
However, as we shall explain, we only need a weaker consequence of the conclusion.

\begin{theorem}[Green-Tao]\label{thm:greentao}
Suppose $P$ is a finite set of $n$ points in the real projective plane
$\mathbb{RP}^2$. Let $K > 0$ be a real parameter. Suppose that $P$ spans at most
$Kn$ ordinary lines. Suppose also that $n \geq \exp\exp(CK^C)$ for some
sufficiently large absolute constant $C$. 

Then, after applying a projective transformation if necessary, $P$ differs by at
most $O(K)$ points (which can be added or deleted) from an example of one of the
following three types:
\begin{itemize}
    \item[I.] $n - O(K)$ points on a line.
    \item[II.] For some $m = n/2 + O(K)$, the set
      \begin{equation*}
            X_{2m} = \left\{[\cos{\frac{2\pi j }{m}},\sin{\frac{2\pi j}{m}},1]:
            0 \leq j < m\right\} \cup \left\{[-\sin{\frac{\pi j}{m}},
              \cos{\frac{\pi j}{m}}, 0]: 0 \leq j < m\right\}. 
        \end{equation*}
    \item[III.] A coset $H + h$, where $3h \in H$, of a finite subgroup $H$ of
      the nonsingular real points on an irreducible cubic curve, with $H$ having
      cardinality $n + O(K)$. 
\end{itemize}
\end{theorem}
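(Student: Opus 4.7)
The plan is to follow the broad architecture of the Green--Tao argument, which combines combinatorial incidence geometry with the classical group law on cubic curves. First, I would translate the hypothesis ``at most $Kn$ ordinary lines'' into its dual form: since every pair of points lies on some connecting line and the non-ordinary ones absorb at least three points, a short double-counting yields that $P$ determines at least $\binom{n}{2}/3 - O(Kn)$ collinear triples. So the real content of the hypothesis is an abundance of $3$-rich collinear triples in $P$.

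Second, and this I expect to be by far the main obstacle, I would show that almost all of $P$ lies on a single algebraic curve $C$ of degree at most $3$. The idea is to apply an earlier intermediate theorem (in the Green--Tao paper this is a weaker structural result obtained via Chasles / Cayley--Bacharach together with a bootstrapping on the number of $3$-rich lines): a point set with very many $3$-rich lines must have $n - O(K)$ of its points on a cubic, because otherwise Bezout-type intersection bounds would force too many residual ordinary lines. This step is where all the hard incidence geometry is hidden; a careful quantitative version requires the projective plane framework, handling reducible components, and tracking the $O(K)$ errors.

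Third, once $P$ is known to lie (up to $O(K)$ exceptions) on a cubic curve $C$, I would invoke the classical group law on the nonsingular locus of $C$: after choosing a base point, there is an abelian group structure on $C_{\rm sm}$ such that three points are collinear iff their group sum equals a fixed element $e \in C_{\rm sm}$. The few-ordinary-lines hypothesis now reads: for almost all pairs $a,b \in P \cap C$, the element $e - a - b$ also lies in $P$. A standard Freiman / inverse-sumset argument on this additive structure then forces $P \cap C$ to be contained, up to $O(K)$ points, in a coset $H + h$ of a finite subgroup $H$ of $C_{\rm sm}$, with $3h = e$ (which, after translating $e$ away, is the condition $3h \in H$).

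Fourth, I would case-split according to the geometry of $C$. If $C$ is irreducible (smooth elliptic, nodal, or cuspidal cubic), the coset description is exactly case~III. If $C$ is reducible as line $\cup$ conic, then the finite subgroups of the smooth locus have a dihedral-on-the-conic shape, and a direct computation identifies them with the configurations $X_{2m}$ of case~II (after the appropriate projective normalisation placing the conic as the unit circle). If $C$ is a union of three lines, then every finite subgroup concentrates almost all mass on one line, and we recover case~I. Throughout, the $O(K)$ errors coming from the three reductions (curve selection, subgroup approximation, and projective normalisation) combine additively, which is why the total allowed deviation is $O(K)$, and the doubly exponential lower bound on $n$ enters only to make the Freiman-type step quantitative enough to dominate these errors.
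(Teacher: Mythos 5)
You should first be aware that the paper does not prove this statement at all: Theorem~\ref{thm:greentao} is quoted from Green and Tao~\cite[Thm.~1.5]{GT13} and used as a black box, so there is no in-paper proof to compare against, and your proposal must be measured against the original (roughly sixty-page) argument of~\cite{GT13}. As a proof it has genuine gaps. The decisive one is your second step: the assertion that a set spanning few ordinary lines must have all but $O(K)$ of its points on a cubic is itself a deep intermediate theorem of Green and Tao (their Theorem~1.4 and the propositions feeding into it), proved via Melchior's inequality applied to the dual line arrangement, a careful analysis of the triangulation of $\mathbb{RP}^2$ induced by the connecting lines, and repeated use of the Chasles/Cayley--Bacharach theorem. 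Your one-sentence gesture that ``Bezout-type intersection bounds would force too many residual ordinary lines'' stands in for what is the bulk of the entire Green--Tao paper; you acknowledge the gap yourself, but acknowledging it does not close it. (The Remark following Corollary~\ref{cor:atmost1triple} in the present paper is relevant here: even the full cubic-curve structure theorem is \emph{weaker} than Theorem~\ref{thm:greentao}, precisely because the cubic may split into three lines, so your step two does not by itself yield the trichotomy I--III.)

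Two further points would need real work even granting that step. The additive stage requires a quantitative ``almost-group'' statement -- a set in a locally compact abelian group that is nearly closed under the ternary collinearity relation is within $O(K)$ of a coset of a finite subgroup -- which is not an off-the-shelf Freiman theorem; it is exactly where the hypothesis $n \geq \exp\exp(CK^C)$ is consumed, and your sketch does not explain how. Moreover, your case analysis of reducible cubics is incorrect as stated: for a union of three lines the coordinate group on each punctured component is $\mathbb{R}^\times$ or $\mathbb{R}$, which admits only bounded finite subgroups, so this case is \emph{excluded} for large $n$ rather than ``concentrating mass on one line''; case~I arises from a separate, earlier branch of the argument (near-collinear configurations), not from the group law. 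Your roadmap correctly identifies the architecture of~\cite{GT13}, but it is an outline of that proof, not a proof.
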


Here, a cubic curve refers to the zero set of a homogeneous cubic polynomial,
and it is irreducible if its defining cubic polynomial is irreducible.
It is well-known in the study of elliptic curves that the nonsingular real points
on an irreducible cubic curve form an abelian group.
However, we will not need these notions in our proof of Theorem~\ref{thm:new},
although the curious readers are referred to Section~2 of~\cite{GT13}.
It is only important to us that a line and a cubic curve intersect in at most three points.
The following corollary is all we need from the above theorem.

\begin{corollary} \label{cor:atmost1triple}
Suppose $P$ is a finite set of $n$ points in the real projective plane
$\mathbb{RP}^2$ satisfying the assumptions in Theorem~\ref{thm:greentao}. Then,
after applying a projective transformation and deleting at most $O(K)$ points from $P$ to obtain $Q$,
there is at most one line passing through more than 3 points of $Q$.
\end{corollary}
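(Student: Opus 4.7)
My plan is to invoke Theorem~\ref{thm:greentao} directly and then perform a short case analysis on the three structured families it produces. Apply the theorem to $P$: after a projective transformation, we obtain an ``exact'' extremal configuration $P_0$ of type I, II, or III with $|P \triangle P_0| = O(K)$. Define $Q := P \cap P_0$; then $|P \setminus Q| \leq |P \triangle P_0| = O(K)$, which fits the deletion budget. Since $Q \subseteq P_0$ and passing to a subset cannot increase the number of collinear points on any line, it suffices to verify that $P_0$ itself admits at most one line meeting it in more than three points.

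Cases I and III are essentially immediate. In Case I, all of $P_0$ lies on a single line $\ell$, so $\ell$ is the sole line meeting $P_0$ in more than one point, and in particular the only candidate for exceeding the threshold. In Case III, $P_0$ lies on the nonsingular real locus of an irreducible cubic curve; by B\'ezout's theorem a line meets such a curve in at most three points, so no line contains more than three points of $P_0$ at all, which is in fact stronger than what is required.

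The only genuine content lies in Case II. Split $X_{2m}$ into its $m$ finite points $\{(\cos(2\pi j/m), \sin(2\pi j/m)) : 0 \leq j < m\}$ on the unit circle and its $m$ projective points on the line at infinity $\ell_\infty$. The exceptional line is $\ell_\infty$, which contains all $m$ infinity points. For any other line $\ell'$: it meets the unit circle in at most two points of $X_{2m}$, and it meets $\ell_\infty$ in exactly one projective point, which may or may not be one of the $m$ distinguished infinity points. Hence $|\ell' \cap X_{2m}| \leq 3$, and $\ell_\infty$ is the unique line through more than three points of $X_{2m}$. The main obstacle, such as it is, is this projective-versus-affine bookkeeping in $\mathbb{RP}^2$; otherwise the argument is a routine incidence check.
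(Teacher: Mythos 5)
Your proposal is correct and follows essentially the same route as the paper: both apply Theorem~\ref{thm:greentao}, take $Q$ to be the intersection of $P$ with the structured example (so only deletions are needed), and then run the same three-case incidence check, with the line at infinity as the unique exceptional line in Type~II and B\'ezout handling Type~III. No issues.
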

\begin{proof}
  After applying Theorem~\ref{thm:greentao} and adding and/or deleting $O(K)$ points from $P$,
  we obtain $Q'$ satisfying one of the types in Theorem~\ref{thm:greentao}.
  Take $Q=Q'\cap P$, i.e., $Q$ is obtained by deleting $O(K)$ points from $P$,
  without adding any point. Then $Q$ is a subset of one of the three types.
    
    In Type I, $Q$ consists of a set of collinear
    points, thus there is at most one line through more than 3 points. In Type II, $Q$ consists of a
    set of collinear points (on the line at infinity) and another set of points
    on the unit circle, so every line, except for the line at infinity, passes through at most 3 points.
    In Type III, any line intersects $Q$ in at most three
    points because $Q$ is on an irreducible cubic curve. 
\end{proof}

\begin{remark*}
  A weaker structure theorem~\cite[Thm.~1.4]{GT13} to Theorem~\ref{thm:greentao} states that
  almost all points of $P$ lie on a cubic curve. This is insufficient to deduce
  Corollary~\ref{cor:atmost1triple} since the cubic curve could be a union of three lines.
\end{remark*}

Now we are ready to prove Theorem~\ref{thm:new} which follows from the next statement.

\begin{proposition}
    For any real number $\eps > 0$, there exists a constant $n_\eps$
    such that any initial configuration $(P,w_0)$ on a noncollinear point set
    $P$ in the plane can be transformed into a configuration $(P,w)$ such that
    $|w| \geq (1- \eps) |P| - n_\eps$ after $O(|P|)$ line switches. 
\end{proposition}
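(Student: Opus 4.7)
The plan is to combine Lemma~\ref{lem:component} with Corollary~\ref{cor:atmost1triple} via a case analysis on the structure of the ordinary line graph of $P$. Fix $\eps > 0$, choose $K = K(\eps)$ large enough (roughly of order $1/\eps$), and let $n_\eps$ exceed the $\exp\exp(CK^C)$ threshold of Theorem~\ref{thm:greentao}. Assume $|P|$ is above this threshold, since for smaller $|P|$ the claim is trivial with a suitable choice of $n_\eps$. I would maintain an active set $P'\subseteq P$ (initially $P$) whose weight has not yet been fixed, with the invariant that each point in $P\setminus P'$ already has weight $+1$ and is not touched again. At each step I compute the graph $G'$ on $P'$ whose edges are the ordinary lines of $P$ with both endpoints in $P'$; switching such a line flips only two points, both in $P'$, so cleared points outside $P'$ are preserved.

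Case~1 (large connected component): if $G'$ has a component $H$ of size at least $K$, I apply Lemma~\ref{lem:component} to $H$ to turn all but one ``root'' vertex of $V(H)$ to weight $+1$, using at most $|V(H)|-1$ switches, and then remove the cleared vertices from $P'$. Each iteration removes at least $K-1$ vertices and leaves behind a single undetermined root. Since the cleared sizes telescope to at most $|P|$, the total switch count is $O(|P|)$, and the cumulative number of leftover roots over all iterations is at most $|P|/(K-1) = O(\eps|P|)$.

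Case~2 (no large component): at some point, every component of $G'$ has size less than $K$. A short counting argument then bounds the number of ordinary lines of the active set by $O(K|P'|)$, putting us into the hypothesis of Corollary~\ref{cor:atmost1triple}. After a projective transformation (which commutes with the switch game) and deleting $O(K)$ points, we obtain $Q$ in which at most one line $L$ passes through more than three points. Type~I of Green--Tao (almost all of $Q$ on $L$) cannot occur under this hypothesis, because the off-$L$ points would be joined by ordinary lines of $P$ to all but $O(K)$ of the on-$L$ points, producing a component of $G'$ of linear size and contradicting the assumption. So $Q$ must be of Type~II (points on a circle together with their directions at infinity) or Type~III (a coset of a finite abelian group on an irreducible cubic curve). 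In both cases three collinear points of $Q$ satisfy an additive relation $a+b+c = \text{const}$ in the underlying abelian group, and I would show that the $\mathbb{F}_2$-span of the corresponding three-point line switches has codimension at most the rank of the $2$-torsion of the group, which is $O(1)$. Consequently the covering radius of this span is $O(1)$, so all but $O(1)$ points of $Q$ can be flipped to $+1$ using $O(|Q|)$ further switches.

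Summing the contributions --- at most $O(\eps|P|)$ leftover roots, $O(K)$ points deleted in the stuck analysis, and $O(1)$ residual from the covering-radius argument --- the set of points not equal to $+1$ at termination has size $O(\eps|P|)+O(K)$, which yields $|w| \geq |P| - 2\bigl(O(\eps|P|)+O(K)\bigr) \geq (1-\eps')|P| - n_{\eps'}$ after a harmless rescaling, with a total of $O(|P|)$ switches. I expect the main technical obstacle to be the algebraic covering-radius computation in Case~2 for Types~II and~III: identifying the $\mathbb{F}_2$-linear invariants of the three-point-switch action and exhibiting explicit low-weight coset representatives is a genuine step beyond the ordinary-line framework of Lemma~\ref{lem:component} and is where all the Green--Tao-specific structure is really used.
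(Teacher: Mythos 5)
Your high-level dichotomy (large component of the ordinary line graph versus few ordinary lines plus Green--Tao) is the same as the paper's, but the proposal has two genuine gaps. The first is structural: you peel off large components \emph{first}, promise never to touch the cleared points again, and therefore define $G'$ on the active set $P'$ using only ordinary lines \emph{of the original $P$}. This breaks Case~2 in two ways. First, smallness of the components of $G'$ only bounds the number of lines with exactly two points of $P$ inside $P'$; it says nothing about the number of lines with exactly two points of $P'$ (which may contain further, already-cleared points of $P\setminus P'$), so the hypothesis of Corollary~\ref{cor:atmost1triple} is not verified for $P'$ --- indeed $P'$ could be in general position and have $\binom{|P'|}{2}$ ordinary lines of its own while $G'$ is edgeless. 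Second, the switches you need in Case~2 are along lines spanned by $Q\subseteq P'$ that may carry many cleared points of $P\setminus P'$, so your invariant ``cleared points are never touched'' is violated and the accounting collapses. The paper resolves exactly this by reversing the order: it recurses on $P\setminus V(H)$ \emph{first} (as a fresh instance with its own ordinary line graph, allowing arbitrary switches there), and only afterwards cleans $V(H)$ via Lemma~\ref{lem:component}, whose switches are ordinary lines of $P$ and hence cannot disturb $P\setminus V(H)$.

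The second gap is the one you flag yourself: the treatment of Types~II and~III via the $\mathbb{F}_2$-span of three-point-line switches and a covering-radius bound is an unproven assertion, not a proof, and it is more delicate than you suggest because Corollary~\ref{cor:atmost1triple} only gives you a set $Q$ obtained by \emph{deleting} $O(K)$ points from a Type~II/III configuration; the missing points destroy the clean rule ``three points are collinear iff they sum to a constant'' (lines acquire spurious ordinary status, and the available three-point lines are a perturbed subset), so the codimension computation would have to be made robust to these deletions. The paper shows that none of this algebra is needed: the only consequence of Types~I--III it uses is that $Q$ has \emph{at most one line through more than three points} (this also makes your separate argument ruling out Type~I unnecessary). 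From that single fact, Claim~\ref{cubic} gets $|w|\geq |Q|-2$ by an elementary induction on the components of the ordinary line graph of $Q$: Sylvester--Gallai guarantees a nontrivial component, Lemma~\ref{lem:component} cleans it, and the final two $-1$'s (the root $u$ off the rich line and one stray $v$) are repaired by switching the line $uv$, which meets $Q$ in at most three points and so creates at most one new $-1$. Replacing your covering-radius step by this combinatorial claim, and reorganizing the peeling into the paper's recursion, is what is needed to close the argument.
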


\begin{proof}
    The proof is by induction on $|P|$ denoted as $n$. For $n < n_\eps$
    where $n_\eps$ is to be chosen later, this statement is vacuously
    true. With foresight, we choose $K = \lceil 2/ \eps \rceil$. Let $G$ be
    the ordinary line graph associated with $P$, and let $H$ be the largest component of $G$.
    Our proof is then divided into two cases,
    depending on whether the size of $H$ is big or small.

    \smallskip
    \emph{Case 1}: $H$ has size at
    least $K$. If $P \setminus V(H)$ is not collinear, we can apply the
    inductive hypothesis to conclude that after $O(|P| - |V(H)|)$ line switches,
    the sum of weights of $P \setminus V(H)$ is at least $(1 - \eps)(|P| - |V(H)|) - n_\eps$.
    If $P \setminus V(H)$ is collinear, there exists $p
    \in P$ not on the unique line determined by $P \setminus V(H)$. Applying
    switches using lines determined by $p$ and points in $P \setminus V(H)$, the
    sum of weights of $P \setminus V(H)$ can be $|P| - |V(H)|$, which is even
    better than the previous case. 
    
    Now we can apply Lemma~\ref{lem:component} to
    guarantee that all but at most one point in $H$ are assigned with $+1$
    without affecting any points in $P \setminus V(H)$ using at most $|V(H)|$
    line switches. By the choice of $K$, the sum of weights of $H$ is at least
    $|V(H)|-2\geq (1 - \eps)|V(H)|$, hence the sum of weights of $P$ is at
    least 
    \[ (1 - \eps)(|P| - |V(H)|) - n_\eps + (1 - \eps)|V(H)|=(1 -\eps)|P| - n_\eps. \] 

    \smallskip
    \emph{Case 2}: $H$ has size less than
    $K$. Then, each vertex in $G$ has degree less than $K$, so the number of
    edges of $G$ is $< Kn$. Hence, $P$ determines fewer than $Kn$ ordinary
    lines. We choose $n_\eps \geq \exp \exp (CK^C)$ for $C$ as in
    Theorem~\ref{thm:greentao}. For $n\geq n_\eps$,
    Corollary~\ref{cor:atmost1triple} implies that, after a projective transformation and deleting
    at most $\delta=O(K)$ points of $P \subset \mathbb{R}^2 \subset \mathbb{R}\mathbb{P}^2$,
    we obtain a set $Q$ satisfying at most one line passes through more than three points of $Q$.
    If $Q$ is contained in a line, then we add any point of $P$ which does not lie on that line back to $Q$.
    Thus we may assume that $Q$ is noncollinear.
    Furthermore, we can choose $n_\eps \geq 2\delta + 2$, which implies that $|Q| - 2 - \delta \geq (1 - \eps)|P| - n_\eps$.
    We conclude the inductive process using the following claim.  
    
    \begin{claim}\label{cubic}
        Any initial configuration $(Q,w_0)$ on a noncollinear point set $Q$,
        where at most one line intersects $Q$ in more than three points, can be
        transformed into a configuration $(Q,w)$ with $|w| \geq |Q| - 2$ after
        $O(|Q|)$ line switches. 
    \end{claim}

    The rest of the proof is devoted to this claim. Let $G$ denote the ordinary
    line graph associated with the given $Q$. Our proof of the claim is by
    induction on the number of connected components of $G$. In the base case,
    $G$ has only one component and Lemma~\ref{lem:component} gives us what we
    want. 
    
    For the inductive process, we can take the largest component $H$ of
    $G$. Notice that $|V(H)| \geq 2$ is guaranteed by the Sylvester--Gallai
    theorem (see, for example, \cite{GT13} Theorem~1.1). Also, there exists a
    point $u \in V(H)$ not on the potential line intersecting $Q$ in more than
    three points (otherwise there won't be any edges within $H$). If $Q\setminus V(H)$ is noncollinear,
    we can perform $O(|Q|- |V(H)|)$ line switches to make
    sure that at most one point in $Q \setminus V(H)$ is assigned weight $-1$ by
    the inductive hypothesis. If $Q\setminus V(H)$ is a collinear set, we can
    achieve the same effect by switching the lines passing through a point in
    $V(H)$ not collinear with $Q\setminus V(H)$. Then, we can apply
    Lemma~\ref{lem:component} to make sure that all points in $H$ other than $u$
    are assigned weight $+1$. Now at most two points in $Q$ could be assigned
    weight $-1$, and we only need to deal with the case where there are exactly
    two such points. In that case, these two points are $u$ and some $v\in Q\setminus V(H)$.
    Switch the line through $u,v$. Because this line contains
    the point $u$, it only intersects $Q$ in at most three points. Hence this
    switch reduces the number  of $-1$'s assigned to $Q$ to at most one, concluding the proof. 
    \end{proof}

\section{Proof of Theorem~\ref{thm:old}}

A line is said to be \emph{negative} (resp., \emph{positive}, \emph{zero}, \emph{nonnegative}),
if the sum of the weights of the points incident to it is as such.
A point set is said to be \emph{negative} (resp., \emph{positive}, \emph{zero}, \emph{nonnegative}),
if the sum of the weights of its points is as such.
A point pair $p,q$ is said to be a \emph{negative pair} if $w(p)=w(q)=-1$.
(The last two definitions are consistent.)

\smallskip
Throughout the proof we will use the following procedure.

\subparagraph{Procedure \texttt{N}.}
Given an input point set, repeatedly switch negative connecting lines one by one in an arbitrary order.
Note that this procedure terminates, since the number of positive weights strictly increases by $2$
at each application. Observe that \texttt{N} terminates with every connecting line of the point set
being nonnegative.

\begin{lemma} \label{lem:gen-pos}
  Let $P$ be any set of $n$ points in general position in the plane.
  Then the maximum discrepancy of the corresponding board is at least $n-2$.
  This inequality is the best possible.
\end{lemma}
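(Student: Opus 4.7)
\medskip

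\noindent\textbf{Proof proposal.} The plan is to exploit the defining property of general position: since no three points of $P$ are collinear, \emph{every} connecting line is ordinary, so each available switch is a transposition that flips the weights of exactly one pair of points. In particular, each switch alters the number of $-1$-weighted points by $0$ or $\pm 2$, hence preserves its parity. This parity invariant will give both the lower bound and the matching upper bound.

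For the lower bound, let $N \subseteq P$ be the set of points with initial weight $-1$, and write $|N| = 2k + r$ with $r \in \{0,1\}$. Arbitrarily pair up $2k$ of the points of $N$ into $k$ disjoint pairs $\{p_i, q_i\}$. For each $i$, the line through $p_i$ and $q_i$ is a connecting line of $P$, and because $P$ is in general position this line contains no other point of $P$; switching it therefore flips only $w(p_i)$ and $w(q_i)$, changing them from $-1$ to $+1$ and leaving all other weights unchanged. After performing these $k \le n/2$ switches in sequence, at most $r \le 1$ point of $P$ still has weight $-1$, so the final signed discrepancy is at least $n - 2r \ge n - 2$. The total number of switches is $O(n)$, as required.

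To see that the bound $n-2$ cannot be improved in general, take any $P$ in general position with $n$ odd and the initial assignment $w_0 \equiv -1$. Then $|N| = n$ is odd, and since every switch preserves the parity of $|N|$ (each switch flips the signs of exactly two points), we can never reach an assignment in which $|N| = 0$. Consequently, $F(P, w_0) \le n - 2$, which combined with the lower bound just established gives $F(P) = n - 2$. This shows that the inequality $F(P) \ge n - 2$ is best possible, completing the proof.

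The proof is essentially a counting/parity argument, and there is no real obstacle: the key observation is simply that general position makes every connecting line ordinary, so the group of reachable configurations is exactly the coset of the subgroup of $\{\pm 1\}^n$ generated by sign-flips of pairs, i.e., the parity of the number of $-1$'s.
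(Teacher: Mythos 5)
Your proof is correct and follows essentially the same approach as the paper: both arguments use the fact that in general position every connecting line is ordinary, eliminate the $-1$'s in disjoint pairs (your explicit pairing is just a static version of the paper's greedy ``switch any negative pair'' procedure), and invoke the parity invariant of the number of $-1$'s for tightness. The only cosmetic difference is the extremal example --- the paper uses a single $-1$ (which works for every $n$), while your all-$(-1)$ assignment with $n$ odd covers only odd $n$, which still suffices to show the bound cannot be improved in general.
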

\begin{proof}
  First use Procedure \texttt{N} on input $P$. That is, while there exists a negative pair,
  switch the corresponding line. When the number of negative pairs is zero,
  the number of $-1$ weights is at most one; which yields the claimed bound.
  A set of $n-1$ weights $+1$ and one  weight $-1$ placed arbitrarily on the $n$ points
  provides a tight bound; indeed, the odd parity of the total negative weight is maintained
  after each switch.
\end{proof}

The next two lemmas implement the induction step and the induction basis, respectively.

\begin{lemma} \label{lem:reduction}
  Let $P$ be any set of $n \geq 3$ noncollinear points in the plane.
  Let $p,q,r \in P$ so that $P'= P \setminus \{p,q,r\}$ is not collinear, and either
  \begin{enumerate} [(i)] \itemsep 1pt
  \item $p,q,r$ are collinear on a line incident to exactly three points; or
  \item $\ell_1= \ell(p,q)$ and $\ell_2=\ell(p,r)$ are ordinary lines sharing a common point.
  \end{enumerate}
  Then $F(P) \geq F(P') +1$.
\end{lemma}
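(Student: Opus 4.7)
The plan is: given any initial assignment $w_0$ on $P$, we describe a sequence of line switches producing a final assignment $w$ with $\sum_{x\in P}w(x)\ge F(P')+1$. The sequence proceeds in two stages. In the first stage, we use only switches along connecting lines of $P'$ to drive the restriction $w\big|_{P'}$ to have sum at least $F(P')$. In the second stage, we adjust $w$ on $\{p,q,r\}$ using the special lines provided by the hypothesis, chosen precisely so that they miss $P'$ entirely.

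For the first stage, consider the restricted initial assignment $w_0\big|_{P'}$. Since $P'$ is noncollinear by hypothesis, the definition of $F(P')$ guarantees a sequence of switches along connecting lines of $P'$ that transforms $w_0\big|_{P'}$ into some assignment $w'$ with $\sum_{v\in P'}w'(v)\ge F(P')$. We replay the same sequence on $P$: any line $\ell$ used in the $P'$-strategy is also a connecting line of $P$, and switching it in $P$ flips precisely the points of $\ell\cap P$. The effect on $P'$ is identical to that of the $P'$-simulation, while the side-effect on $\{p,q,r\}$ is to flip the signs of those of $p,q,r$ that happen to lie on $\ell$. Consequently, after stage one we have $w\big|_{P'}=w'$, and $(w(p),w(q),w(r))$ is some assignment of $\pm 1$ values whose sum $S$ belongs to $\{-3,-1,1,3\}$.

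In the second stage, we use the special lines guaranteed by (i) or (ii); the crucial point is that each of these lines intersects $P$ only inside $\{p,q,r\}$, so switching any of them leaves $P'$ untouched. In case (i), the line through $p,q,r$ contains exactly $\{p,q,r\}$ among points of $P$, and switching it flips all three signs simultaneously; hence one of the two configurations with sums $S$ and $-S$ has sum $\ge 1$. In case (ii), $\ell_1$ is ordinary with $\ell_1\cap P=\{p,q\}$ and $\ell_2$ is ordinary with $\ell_2\cap P=\{p,r\}$, so the switches of $\ell_1$, $\ell_2$, and the composition of $\ell_1$ and $\ell_2$ let us flip the pairs $\{p,q\}$, $\{p,r\}$, and $\{q,r\}$ respectively, all without touching $P'$. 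A direct case check over the eight sign patterns for $(w(p),w(q),w(r))$ confirms that we can always achieve sum $\ge 1$: whenever $S\le -1$, at least two of the three signs are $-1$, and flipping such a pair raises the sum by $4$; when $S\ge 1$ no further switch is needed.

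Combining the two stages yields $\sum_{x\in P}w(x)=\sum_{v\in P'}w'(v)+w(p)+w(q)+w(r)\ge F(P')+1$, as required. The structural fact making the two stages compatible is that the stage-two correction lines avoid $P'$; beyond this, the only thing to verify is the pair-flip analysis in case (ii), which is routine.
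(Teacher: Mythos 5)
Your proposal is correct and follows essentially the same strategy as the paper's proof: first achieve $F(P')$ on $P'$ by replaying a $P'$-optimal switching sequence (accepting arbitrary side-effects on $p,q,r$), then correct the signs on $\{p,q,r\}$ using the hypothesis lines, which meet $P$ only inside $\{p,q,r\}$ and hence leave $P'$ untouched. Your explicit pair-flip case analysis in case (ii) --- flipping $\{p,q\}$, $\{p,r\}$, or $\{q,r\}$ (the last via composing the two ordinary switches) --- is just a slightly more systematic phrasing of the paper's case distinction on $w'(p)$ and whether $q,r$ form a negative pair.
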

\begin{proof}
Let $T=\{p,q,r\}$.
After a suitable sequence of line switches determined by $P'$, we obtain a
weight assignment $w$ on $P$ such that $\sum_{p\in P'} w(p)=F(P)$. Note that
such switches may affect $T$, as points of $T$ may be incident to lines
determined by $P'$. 

(i)~Assume first that $p,q,r$ are collinear on $\ell$. Switch $\ell$ if $\ell$ is negative,
to obtain a weight assignment $w'$.
We have $\sum_{x \in T} w'(x) \geq 1$, thus $F(P) \geq F(P') +1$ since $w$ and $w'$ agree on $P'$.

(ii)~Assume now that $\ell_1= \ell(p,q)$ and $\ell_2=\ell(p,r)$ are ordinary lines sharing a common point.
Switch $\ell_1$ and/or $\ell_2$ if they are negative, to obtain a weight assignment $w'$. 
If $w'(p)=-1$, $\sum_{x \in T} w'(x) =1$, and
\[  \sum_{x \in P} w(x) \geq F(P') +1, \]
as required.
If $w'(p)=+1$, and $q,r$ is not a negative pair, the above inequality holds,
as required.
If $w'(p)=+1$, and $q,r$ is a negative pair, switch both $\ell_1$ and $\ell_2$ to obtain the new weight assignment $w''$;
we have $\sum_{x \in T} w''(x) =3$, which is more than required, and
the bound follows.
Note that these (possible) switches of $\ell_1$ or $\ell_2$ do not affect $P'$ and the discrepancies
add up as in case (i).
\end{proof}

\begin{lemma} \label{lem:basis}
  Let $P$ be any set of $n \geq 3$ noncollinear points in the plane.

  If $3 \leq n \leq 5$, then $F(P) \geq n-2$. If $n=6$, then $F(P) \geq n-4 =n/3= 2$.
\end{lemma}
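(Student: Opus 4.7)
The plan is to proceed by case analysis on $n\in\{3,4,5,6\}$ and on the maximum number $k$ of collinear points of $P$. Throughout I would apply Procedure \texttt{N} to the initial configuration and analyze the terminal state, using the fact that after \texttt{N}, every connecting line $\ell$ is nonnegative; equivalently, $|S\cap\ell|\le\lfloor|\ell|/2\rfloor$, where $S$ denotes the set of points of weight $-1$ at termination.

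I would first dispose of the uniform sub-case $k\le 3$. Since every connecting line contains at most $3$ points, after \texttt{N} each such $\ell$ satisfies $|S\cap\ell|\le 1$. But any two points of $S$ would determine a line containing both of them, so $|S|\le 1$, giving $|w|\ge n-2\ge 2$. This disposes of $n=3$, every configuration for $n=4$ (general position or $3+1$), the $n=5$ configurations with no $4$-point line, and the $n=6$ configurations with no line of $\ge 4$ points.

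The sticky case is $n=5$ with $k=4$ (four collinear points $p_1,\dots,p_4$ on a line $\ell$ and one off-point $q$), and this is the main obstacle: Procedure \texttt{N} can terminate with $|S|=2$, e.g.\ at $w(p_1)=w(p_2)=-1$ with the remaining weights $+1$. I would overcome it with a parity and linear-algebra argument. Every connecting line has $2$ or $4$ points (both even), so each switch flips an even number of weights and the parity of $|S|$ is a switch invariant. Viewing switches as vectors in $\mathbb{F}_2^5$ indexed by $P$, the four ordinary-line switches $\{qp_1,qp_2,qp_3,qp_4\}$ are linearly independent, and their sum equals the switch of the $4$-point line $\ell$; hence they span the $4$-dimensional even-weight subspace of $\mathbb{F}_2^5$. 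So the orbit of any initial weight vector under the switching group is exactly its $|S|$-parity class, and one can reach a configuration with $|S|\in\{0,1\}$ by adding at most $4$ generators, yielding $|w|\ge n-2=3$ in $O(n)$ switches.

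Finally, for $n=6$ with $k\in\{4,5\}$, I would apply \texttt{N} and read off $|S|\le 2$ from the line constraints. When $k=5$, nonnegativity of the $5$-point line forces $|S\cap\{p_i\}|\le 2$, and the ordinary lines $qp_i$ force that if $q\in S$ then no $p_i\in S$, so $|S|\le 2$. When $k=4$ (with off-points $q_1,q_2$), I would split on whether $q_1q_2$ is ordinary or forms a $3$-point line $\ell'=q_1q_2p_{i_0}$; in each sub-case the constraints from $\ell$, from $q_1q_2$ or $\ell'$, and from the ordinary lines $q_jp_i$ jointly force $|S|\le 2$, giving $|w|\ge n-4=2$. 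The general-position subcases of $n=4,5$ could alternatively be handled by directly invoking Lemma~\ref{lem:gen-pos}, but the uniform $k\le 3$ argument above already covers them.
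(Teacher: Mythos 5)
Your proof is correct, and while it shares the paper's basic skeleton (run Procedure \texttt{N}, then split on the size $k$ of the largest collinear subset), the way you resolve the individual cases is genuinely different. Your uniform observation for $k\le 3$ --- that at termination of \texttt{N} every connecting line carries at most one $-1$, so two surviving $-1$'s would violate nonnegativity on their common line and hence $|S|\le 1$ --- absorbs $n=3$, all of $n=4$, the $n=5$ cases without a $4$-point line, and the $n=6$, $k\le 3$ cases in one stroke; the paper treats these separately, partly via Lemma~\ref{lem:gen-pos} and, for $n=6$ with a $3$-point line, via a partition together with Lemma~\ref{lem:reduction} and induction. For the genuinely stuck configuration ($n=5$, $k=4$, two $-1$'s on the $4$-point line), the paper's fix is a ``double switch'' of the two ordinary lines joining the off-point to the negative pair, which flips both $-1$'s while leaving the off-point unchanged; your $\mathbb{F}_2$ orbit computation (the four ordinary-line switches are independent and span the even-weight subspace, and parity of $|S|$ is invariant) reaches the same conclusion and in fact determines the entire reachable set, at the cost of being tailored to this one configuration. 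Your direct terminal-state analysis for $n=6$, $k\in\{4,5\}$ likewise replaces the paper's double switch and its reduction-lemma step, and your subcase analysis there (on whether the two off-points span an ordinary line or a $3$-point line) is complete. What your route buys is self-containedness --- no dependence on Lemmas~\ref{lem:gen-pos} and~\ref{lem:reduction} --- plus the stronger bound $n-2$ whenever $k\le 3$; what the paper's route buys is that the double-switch trick and Lemma~\ref{lem:reduction} are reused elsewhere (in Lemma~\ref{lem:long-line} and in the main induction for Theorem~\ref{thm:old}), so the basis and the inductive step share machinery. (A cosmetic slip: in the $k\le 3$ paragraph the chain ``$|w|\ge n-2\ge 2$'' fails for $n=3$, but there the required bound is only $n-2=1$, which you do obtain.)
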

\begin{proof}
  Let $n=3$. Then $P$ is in general position and the claimed inequality follows from
  Lemma~\ref{lem:gen-pos}.

  Let $n=4$. Use Procedure \texttt{N} on input $P$.
  Assume first that $P$ has three collinear points on some line $\ell$, and let $p$ be the
  fourth point.
  If  $\sum_{x \in P \cap \ell} w(x) =3$, we are done, whereas if
  $\sum_{x \in P \cap \ell} w(x) =1$, then switch the line connecting $p$ with the negative point on $\ell$,
  and conclude with $\sum_{x \in P} w(x) \geq 3-1=2=n-2$, as required.
  If $P$ is in general position, the claimed inequality follows from Lemma~\ref{lem:gen-pos}.

  Let $n=5$. Use Procedure \texttt{N} on input $P$.
  We distinguish two cases.
  
  (i) Assume first that $P$ has four collinear points on a line $\ell$, and let $p$ be the remaining point.
  If there is a negative pair of points on $\ell$, perform a ``double switch'' involving $p$ and
  the negative pair. If there is only one negative point on $\ell$,
  switch the line connecting $p$ with the negative point.
In either case, we deduce that $\sum_{x \in P} w(x) \geq 4-1=3=n-2$, as required.

(ii) Assume now that $P$ has exactly three collinear points on a line $\ell$ (\ie,
no four points of $P$ are collinear). Observe that $P$ has no negative pair, since
the line incident to such a pair would be negative, a contradiction.
Therefore, we obtain that $\sum_{x \in P} w(x) \geq 4-1=3=n-2$, as required.

Let $n=6$ and refer to Fig.~\ref{fig:six}.

\begin{figure}[htbp]
\centering
\includegraphics[scale=0.8]{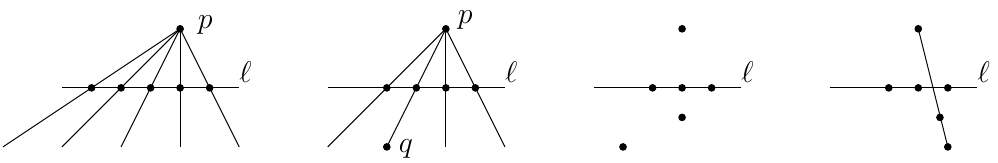}
\caption{The case $n=6$.}
\label{fig:six}
\end{figure}

Assume first that $P$ has five collinear points on a line $\ell$, and let $p$ be the remaining point.
If there is a negative pair of points on $\ell$, perform a ``double switch'' involving $p$ and
the negative pair. If there is only one negative point on $\ell$,
switch the line connecting $p$ with the negative point.
In either case, we deduce that $\sum_{x \in P} w(x) \geq 5-1=4=n-2$, as required.

Assume next that $P$ has four collinear points, $a,b,c,d$ on a line $\ell$,
and let $p$ and $q$ be the remaining points not on $\ell$. Then either $pa$ and $pb$ are ordinary lines
sharing a common point $p$, or  $qc$ and $qd$ are ordinary lines sharing a common point $q$, or both.
Use induction and Lemma~\ref{lem:reduction} to obtain
\[ F(P) \geq  (n-3)/3 +1 = n/3, \]
as required.

Assume next that $P$ has three collinear points, $a,b,c$ on a line $\ell$,
and let $p$, $q$, and $r$ be the other three points not on $\ell$.
Partition $P$ into $T=\{a,b,c\}$ and $P'=P \setminus T$.
If $P'$ is not collinear use induction on $P'$ and switch $\ell$ if needed to make it positive.
Then $F(P') \geq (n-3)/3$ and
\[ F(P) \geq  (n-3)/3 +1 = n/3, \]
as required.
If  $P'$ is collinear, use Procedure \texttt{N} on input $P$; and if there is a negative pair,
switch the ordinary line through this pair. We have
  \[  \sum_{x \in P} w(x) \geq n-2 > n/3, \]
  as required.
\end{proof}

The case of $P$ containing a line incident to many points of the set is treated in the following:

\begin{lemma} \label{lem:long-line}
  Let $P$ be a noncollinear set of $n \geq 7$ points in the plane, in which $\ell$ is a
  connecting line with at least $n-3$ points. Then we have $F(P) \geq n/3$.
\end{lemma}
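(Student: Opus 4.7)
The plan is to prove the bound by strong induction on $n$, with the inductive step reducing $|P|$ by $3$ via Lemma~\ref{lem:reduction}(ii), and with base cases $n \in \{7, 8, 9\}$ closed by Lemma~\ref{lem:basis}. Let $Q = P \setminus \ell$, so that $|Q| = n - |P \cap \ell| \in \{1, 2, 3\}$. The argument splits on $|Q|$.

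The case $|Q| = 1$ must be handled directly, since any removal of three points would leave at most one off-$\ell$ point and typically leave $P'$ collinear, so Lemma~\ref{lem:reduction} cannot be applied. I would first apply Procedure \texttt{N} so that every connecting line is nonnegative, and let $p$ denote the sole off-$\ell$ point. If $w(p) = -1$, then each ordinary line $\{p,x\}$ with $x\in P\cap\ell$ being nonnegative forces $w(x) = +1$, giving $|w| = n-2$. If $w(p) = +1$, then any two negatives $x_1, x_2$ on $\ell$ can be eliminated by successively switching $\ell(p, x_1)$ and $\ell(p, x_2)$: the two switches toggle $w(p)$ twice (leaving it unchanged) and flip $w(x_1), w(x_2)$ from $-1$ to $+1$, while nonnegativity of all lines is restored after the pair of switches. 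Iterating reduces the number of negatives on $\ell$ to at most one, so $|w| \geq n-2 \geq n/3$.

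When $|Q| \in \{2, 3\}$, I would apply Lemma~\ref{lem:reduction}(ii) with $\ell_1 = \ell(p, x)$ and $\ell_2 = \ell(p, y)$, where $p \in Q$ is fixed and $x, y \in P \cap \ell$ are chosen so that both lines are ordinary. The key combinatorial observation is that the only points $x \in P \cap \ell$ for which $\ell(p, x)$ fails to be ordinary are the at most $|Q| - 1 \leq 2$ intersections of $\ell$ with lines $\ell(p, q)$ for $q \in Q \setminus \{p\}$; since $|P \cap \ell| \geq n-3 \geq 4$ for $n \geq 7$, two good choices of $x, y$ always exist. The reduced set $P' = P \setminus \{p, x, y\}$ then has $|P'| = n - 3$ points, contains $Q \setminus \{p\} \neq \emptyset$ off $\ell$ (so is noncollinear), and has $|P' \cap \ell| \geq n - 5 \geq |P'| - 3$, so it again falls under the hypothesis of the current lemma whenever $|P'| \geq 7$. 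Lemma~\ref{lem:reduction}(ii) then gives $F(P) \geq F(P') + 1$.

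To close the induction: for $n \geq 10$, the inductive hypothesis applied to $P'$ yields $F(P') \geq (n-3)/3$, hence $F(P) \geq n/3$. For $n \in \{7, 8, 9\}$ we have $|P'| \in \{4, 5, 6\}$ and Lemma~\ref{lem:basis} gives $F(P') \geq 2$ in each case, whence $F(P) \geq 3 \geq n/3$. The main obstacle is the $|Q| = 1$ case, where Lemma~\ref{lem:reduction} is unavailable and one must verify carefully that the pairing trick maintains validity through successive switches; the remaining work is mostly bookkeeping to guarantee the existence of the ordinary pair $\ell(p, x), \ell(p, y)$ and to dispose of the base values. The total number of switches is $O(n)$, since Lemma~\ref{lem:reduction} contributes $O(1)$ switches per reduction and the direct argument in the $|Q| = 1$ case uses $O(n)$ switches in total.
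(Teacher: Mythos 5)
Your proof is correct and follows essentially the same route as the paper: a direct switching argument when only one point lies off $\ell$, and otherwise a reduction by three points using two ordinary lines through an off-line point $p$ (at most $|Q|-1\le 2$ points of $P\cap\ell$ are ``blocked'' by the other off-line points, while $|P\cap\ell|\ge 4$), followed by Lemma~\ref{lem:reduction}(ii) and induction. The only differences are organizational: you make the induction self-contained within the lemma with base cases supplied by Lemma~\ref{lem:basis}, whereas the paper leans on the outer induction of Theorem~\ref{thm:old}, and your $|Q|=1$ argument is a slightly more elaborate version of the paper's one-line switch of each pair $\{p,x\}$ with $x$ negative.
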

\begin{proof}
  If $|\ell \cap P|=n-1$, let $p$ be the point of $P$ not on $\ell$.
  Make every point $x \in P \cap \ell$ positive by switching the pair $\{p,x\}$ if needed.
  Then $F(P) \geq n-2 \geq n/3$.

  We may assume subsequently that $|\ell \cap P| \leq n-2$,
  Since $|\ell \cap P| \geq n-3 \geq 4$, there exist two points $q,r \in \ell$, so that
  $\ell_1=\ell(p,q)$ and  $\ell_2=\ell(p,r)$ are ordinary lines sharing a common point $p$,
  Moreover, $P'=P \setminus \{p,q,r\}$ is noncollinear.
Use induction and Lemma~\ref{lem:reduction}\,(i) to obtain
\[ F(P) \geq  (n-3)/3 +1 = n/3, \]
as required.
\end{proof}

\medskip
Consider all lines spanned by a set of $n$ points in the plane. For any $k \geq 2$,
let $t_k$ denote the number of lines incident to exactly $k$ points in $P$.
Double counting the number of pairs of points gives $\sum_{k=2}^n \binom{k}{2} t_k = \binom{n}{2}$.
Two key facts in our argument are the so-called  \emph{Erd\H{o}s--Purdy inequality}~\eqref{eq:erdos-purdy}
and the stronger \emph{Hirzebruch's inequality}~\eqref{eq:hirzebruch} below.
For the technical reason of properly handling a few small values of $n$,
we in fact use this latter inequality in our proof, but we could use the former one as well instead,
for larger $n$ in the induction. For the former see~\cite{EP78}, or~\cite[Ch.~7.3]{BMP05}.
For the latter see~\cite{Hir86}, or~\cite[Ch.~7.3]{BMP05}.
\begin{align}
  \max (t_2,t_3) &\geq n-1, \quad \text{ if } n \geq 25 \text{ and } t_n=0.  \label{eq:erdos-purdy} \\
  t_2 + \frac34 t_3 &\geq n + \sum_{k \geq 5} (2k-9) t_k,
  \quad \text{ if } t_n=t_{n-1} = t_{n-2} =0.  \label{eq:hirzebruch}
\end{align}
The latter inequality immediately implies $t_2 + \frac34 t_3 \geq n$ provided $t_n=t_{n-1} = t_{n-2} =0$.

\subparagraph{Proof of Theorem~\ref{thm:old}.}
We proceed by induction on $n$. The basis of the induction is $3 \leq n \leq 6$; it is covered
by Lemma~\ref{lem:basis}. Note that $n-2 \geq n/3$ in the assumed range.

Let now $n \geq 7$.  Assume first that there is a connecting line with at least $n-3$ points.
Then $F(P) \geq n/3$ by Lemma~\ref{lem:long-line}, as required.

We may subsequently assume that there is no connecting line with at least $n-3$ points,
\ie, $t_n=t_{n-1} = t_{n-2} = t_{n-3} =0$.

(i)~Suppose first that there is a connecting line $\ell$ with exactly three points of $P$, say, $T=\{p,q,r\}$.
Partition $P$ into $T$ and $P'=P \setminus T$. Note that $|P'|= n-3$, and so $P'$ is noncollinear
by our assumptions.
We apply induction and Lemma~\ref{lem:reduction}\,(i) to obtain
\[  \sum_{x \in P} w(x) \geq  (n-3)/3 +1 = n/3, \]
\ie, the discrepancies add up.

(ii)~Suppose next that there is no connecting line in $P$ with exactly three points, \ie, $t_3=0$.
By Inequality~\eqref{eq:hirzebruch}, we have $t_2 \geq n$, where $t_2$ is the number of ordinary lines
determined by $P$. Observe that there are at most $n/2$ such lines with pairwise distinct incident points.
Since $n> n/2$, there exist two ordinary lines, say, $\ell_1=\ell(p,q)$ and $\ell_2=\ell(p,r)$,
sharing a common point $p \in P$ ($p,q,r \in P$). Let $T=\{p,q,r\}$.
Partition $P$ into $T$ and $P'=P \setminus T$. Note that $|P'|= n-3$, and so $P'$ is noncollinear
by our assumptions.
We apply induction and Lemma~\ref{lem:reduction}\,(ii) to obtain
\[  \sum_{x \in P} w(x) \geq  (n-3)/3 +1 = n/3, \]
\ie, the discrepancies add up as before.
\qed

\section{Remarks}\label{sec:remarks}

\textbf{1.} Using the weak form~\cite{Be83} of Dirac's conjecture~\cite{Dir51}
(see also~\cite{Han17,PW14}), one can obtain a weaker variant of Theorem~\ref{thm:old}.
Motivated  by the Sylvester–Gallai theorem on the existence of ordinary lines~\cite{BM90,Chv21,GT13,Zee18}, 
de Zeeuw~\cite{BVZ16} asked the following question. Does  there exist a constant
$c$ such that every finite point set $P$ in the plane which cannot be covered by
two lines has three noncollinear elements such that each of the three lines
induced by them contains at most $c$ points of $P$? 
One would call such lines $c$-ordinary, and the triangle $c$-ordinary, where $c \geq 2$. 
Fulek~\etal~\cite{FMN+17} gave a positive answer (with $c=12000$).
Dubroff~\cite{Dub18} reduced this bound significantly (to $c=11$). 
On the other hand, it is known that $c \geq 3$ is needed in general. 
Note that if one can find a triangle, all three sides of which induce ordinary lines,
then one can  perform the induction step by simply deleting these three points and applying
the induction hypothesis to the remaining point set followed by a few switches
applied to the three ordinary lines. Similarly, if one can find a $c$-ordinary triangle,
then one can hope to perform a similar induction step under favorable conditions.

\smallskip
\textbf{2.} The Gale--Berlekamp switching game can be regarded as a problem in coding theory.
A configuration is a 0-1 vector $c$ of length $n$, where each coordinate corresponds to a light.
If the light is on, we write a 0, and if it is off, we write a 1. A switch vector $s$ is a 0-1 vector
with 1's at the positions corresponding to the lights we want to switch. Let $s_1,\ldots, s_k$ be
the switch vectors, and let $S$ be the \emph{linear code} consisting of all partial sums over
$\mathbb{Z}_2$. (These partial sums are the code words.) The covering radius of the code
is the minimum number $r=r(S)$ such that every configuration $c$ is at a Hamming distance
at most $r$ from one of the elements of $S$. If $c$ is at a distance $\delta$ from a code word,
it means that starting from the configuration $c$, we can switch on all but $\delta$ lights.
For the computational complexity of the Gale--Berlekamp code, see~\cite{KS09, RV08}.
A generalization of the coding approach to signed graphs was introduced and studied by
Sol\'e and Zaslavski~\cite{SZ94}.

\smallskip
\textbf{3.} Brualdi and Meyer~\cite{BrM} studied the variant of the switching game where each light
has $k$ different positions that change cyclically modulo $k$. The same problem for higher dimensional
arrays was addressed in~\cite{ArP}. See also Schauz~\cite{Sch12}.

\smallskip
\textbf{4.} It is equally natural to consider the version of the Gale--Berlekamp game where we want
to \emph{minimize} the absolute value of the discrepancy of the configuration, that is, to balance
the numbers of on and off lights as much as possible. Leo Moser conjectured that on a $k\times k$ array,
for any initial configuration, there is a sequence of row and column switches that yields
a configuration with discrepancy 0, 1, or 2. This was proved by Koml\'os and Sulyok~\cite{KoS}
for all sufficiently large $k$ and then by Beck and Spencer~\cite{BS83} for all $k$.
Similarly, one can ask: What is the minimum absolute value of the discrepancy that can be achieved
(for any initial configuration) in our purely geometric variant? The proof of the next result is 
simpler than the proofs of Theorems~\ref{thm:new} and~\ref{thm:old}.

\begin{proposition}
    Any initial configuration $(P,w_0)$ on a noncollinear point set $P$ can be
    transformed into a configuration $(P,w)$ with $|w|\in \{-2,-1,0,1,2\}$ after
    $O(|P|)$ line switches in the geometric Gale--Berlekamp game. 
\end{proposition}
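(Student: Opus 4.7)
The plan is to induct on $n = |P|$. The key invariant is that $|w| \pmod 2$ is preserved under any line switch: switching a line incident to $k$ points, $a$ of them having weight $+1$, changes $|w|$ by $2(k - 2a)$, an even number. Hence $|w| \equiv n \pmod 2$ throughout, so the target $|w|\in\{-2,-1,0,1,2\}$ reduces to $|w| \in \{0, \pm 2\}$ when $n$ is even and $|w| = \pm 1$ when $n$ is odd. The base cases $n \leq 6$ I would dispatch by direct case analysis on the line structure (for instance, $n = 3$ needs at most one switch of any of the three ordinary lines to bring $|w|$ down to $1$).

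For the inductive step $n \geq 7$, I would split on whether $P$ is a near-pencil. If $P$ consists of $n - 1$ points $x_1, \dots, x_{n-1}$ on a line $m$ together with one off-point $z$, I would handle it directly: some subset of the $n-1$ ordinary-line switches $\ell(z, x_i)$, together with an optional switch of $m$, realizes any prescribed sign pattern on $(x_i)$, while $w(z)$ is forced by parity (with one extra bit of freedom provided by switching $m$ when $n$ is even). Choosing the target pattern so that $\sum_i w(x_i) \in \{-1, 0, 1\}$ then yields $|w| \leq 2$ using at most $n$ switches.

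If $P$ is not a near-pencil, I would locate an ordinary line $\ell = \{a, b\}$ via the Sylvester--Gallai theorem such that $P' := P \setminus \{a, b\}$ is noncollinear, apply the inductive hypothesis to $P'$, and clean up with at most one additional switch of $\ell$. For generic configurations any ordinary line works; the sole obstruction occurs when exactly $n - 2$ points of $P$ lie on a common line $m$ and the other two points $p, q$ lie off $m$, in which case I would pick instead an ordinary line of the form $\ell(p, x)$ with $x \in m \cap P$ and $x \neq \ell(p,q) \cap m$, ensuring both that $\ell(p,x)$ is ordinary and that $P \setminus \{p, x\}$ retains $q \notin m$ and is therefore noncollinear. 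The inductive hypothesis applied to $P'$ yields $|w_{P'}| \leq 2$ with $O(n)$ switches, which may also alter $w(a)$ and $w(b)$; this is harmless because $w(a) + w(b) \in \{-2, 0, 2\}$ afterward, and an optional single switch of $\ell$ negates this sum. A brief enumeration over the values of $w_{P'}\in\{-2,-1,0,1,2\}$ and $w(a)+w(b)\in\{-2,0,2\}$ then confirms that one can always attain $|w_P| = |w_{P'} + w(a) + w(b)| \leq 2$, completing the induction.

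The main technical obstacle I anticipate is the structural case analysis ensuring the existence of an ordinary line whose removal leaves a noncollinear set, together with the ad hoc treatment of the near-pencil; once these are in place, the induction step and the cleanup argument become routine, and the total switch count is $O(n)$ in every case.
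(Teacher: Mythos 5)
Your proposal is correct and follows essentially the same route as the paper: induct by removing the two points of an ordinary line (found via Sylvester--Gallai), apply the hypothesis to the remaining noncollinear set, and repair the total weight with at most one switch of that ordinary line, treating the degenerate almost-collinear configurations by toggling points individually along lines through a point off the big line. The only differences are organizational: the paper absorbs the case where $P\setminus\{p,q\}$ is collinear directly into the induction step (so only $n=3$ is needed as a base case), whereas you split off the near-pencil and the $(n-2)$-points-on-a-line situations separately and leave the base cases $4\le n\le 6$ as asserted but unverified (though routine) checks.
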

\begin{proof}
    We do induction on the cardinality of $P$. The base case where $|P|=3$
    is obvious. For the inductive process, we first apply the Sylvester--Gallai
    theorem to find two points $p,q \in P$ that determine an ordinary line. Then
    we consider the point set $P\setminus \{p,q\}$. If $P \setminus \{p,q\}$ is
    a collinear set, then without loss of generality we can assume $p$ is not on
    this line, and we can apply line switches using lines passing through $p$ to
    make the total weight of $P\setminus \{p,q\}$ equal to either $0$ or $1$. If
    $P \setminus \{p,q\}$ is not collinear, then we can apply the induction
    hypothesis to $P\setminus \{p,q\}$. In either case, we can achieve a
    configuration $w'$ such that the total weight, denoted as $\omega$, of $P
    \setminus \{p,q\}$ is in $\{-2,-1,0,1,2\}$. Now, if $w'(p)+w'(q)$ has the
    same sign as $\omega$, switch the line $pq$. Then the total weight of $P$
    will still be in $\{-2,-1,0,1,2\}$, concluding the proof. 
\end{proof}

\end{document}